\newtheorem{theorem}{Theorem}
\newtheorem{lemma}[theorem]{Lemma}
\newtheorem{claim}[theorem]{Claim}
\newtheorem{proposition}[theorem]{Proposition}
\def\eps{\varepsilon}
\def\ITEM#1{\par\smallskip\noindent\hangindent=1.7\parindent\hangafter1\hbox to
1.7\parindent{\hss\upshape#1\space}\ignorespaces}
\newcommand\M{{\mathcal M}} 
\newcommand\N{{\mathcal N}} 
\newcommand\G{{\mathcal G}} 
\newcommand\F{{\mathcal F}} 
\newcommand\cl{\mathop{\mbox{\sf cl}}}
\newcommand\contract{\!\mathbin{\not{~}}}
\title{Sticky polymatroids on at most five elements}
\author{Laszlo Csirmaz%
\thanks{Central European University, Budapest}%
\thanks{email:\tt csirmaz@renyi.hu }}
\begin{document}
\maketitle

\begin{abstract}
The sticky polymatroid conjecture states that any two extensions of the
polymatroid have an amalgam if and only if the polymatroid has no non-modular
pairs of flats. We show that the conjecture holds for polymatroids on five 
or less elements.
\end{abstract}
\begin{IEEEkeywords}
Polymatroid; sticky polymatroid conjecture; modular cut.

\textit{AMS Classification Numbers}---05B35; 06C10; 52B40 
\end{IEEEkeywords}

\section{Introduction}

A polymatroid is \emph{sticky} if any two of its extensions have an amalgam.
This is a direct generalization of the same property of matroids.
If every pair of flats is modular
then the polymatroid is sticky;
the proof in \cite{oxley} generalizes to polymatroids. The
\emph{sticky polymatroid conjecture} states that the converse also holds:
in a sticky polymatroid, each pair of flats is a modular pair.
The corresponding
conjecture for matroids has been stated by Poljak and Turzik
\cite{poljak-turzik}, and received a considerable attention. Poljak
and Turzik showed that the sticky matroid conjecture holds for rank-3
matroids. Bachem and Kern \cite{bachem} showed that the same conjecture
holds in general if it is true for every rank-4 matroid. Generalizing a
result of Bonin in \cite{bonin} which states that a matroid of rank at least
three with 
two disjoint hyperplanes is not sticky, Hochst\"attler and Wilhelmi showed
that matroids having a non-principal modular cut are not sticky
\cite{hoch-wilhelmi}. The same statement for polymatroids was proved in
\cite{convolution} using a convolution-type construction. Thus the sticky
polymatroid conjecture follows from the statement:

\smallskip

\begin{trivlist}\item
\vbox{\par\hangindent=1.6\parindent\hangafter1
$(*)$\space\it
If a polymatroid has a non-modular pair of flats, then it also has a
non-principal modular cut.}
\end{trivlist}
In this note we show that $(*)$ holds for polymatroids on ground set with
at most five elements. Thus a smallest counterexample to the sticky
polymatroid conjecture, if such exists, must have at least six atoms.

Interestingly, the sticky matroid conjecture may not follow from the same
conjecture for polymatroids. The reason is that if two matroids have a
polymatroid amalgam, the value of the rank function of any subset $A$ of the
ground set of either matroid is an integer that is at most $|A|$, the rank
of a set that is not a subset of either ground set might not be an integer
(but can be rational). Interestingly no such example is known. 
At the same time, $(*)$, if true, implies the sticky matroid
conjecture.

\smallskip

All sets in this paper are finite. 
Following the usual practice, ground sets and their subsets are denoted by
capital letters, their elements by lower case letters. The union sign $\cup$
and the curly brackets around singletons are omitted, thus $abA$ denotes the
set $\{a,b\}\cup A$. The \emph{modular defect} of subsets $A$ and $B$ is
defined as
$$
   \delta_f(A,B) = f(A)+f(B)-f(A\cup B) - f(A\cap B).
$$
It is non-negative, and if zero, then the pair $(A,B)$ is called a \emph{modular
pair}.

The paper is organized as follows. Polymatroids, one-point extension, 
and the excess function are defined in Section \ref{sec:definition},
and some basic properties are given. 
Section \ref{sec:linear} introduces the notion of
\emph{linear polymatroid}. This is an intrinsic property shared by all
linearly representable polymatroids. We hope that this notion has further
applications. Two lemmas in Section \ref{sec:main-lemmas} describe
some properties of a minimal counterexample to $(*)$. Using these lemmas and
a property of linear polymatroids, Section \ref{sec:final} shows
that no polymatroid on five or less elements violates $(*)$.


\section{Definitions}\label{sec:definition}

A \emph{polymatroid} $\M=(f,M)$ is a real-valued, non-negative, monotone and 
submodular
function $f$ defined on the set of subsets of the finite set $M$ such
that $f(\emptyset)=0$. Here $M$ is the \emph{ground set}, and $f$ is the
\emph{rank function}. The polymatroid is a \emph{matroid} if all ranks are
integers and $f(A)\le |A|$ for all $A\subseteq M$. For details see
\cite{Lovasz,oxley}. The rank function can be identified with a
$(2^{|M|}-1)$-dimensional real vector, where the indices are the non-empty
subsets of $M$. Vectors corresponding to polymatroids on the ground set $M$
form the pointed polyhedral cone $\Gamma_{\!M}$ \cite{yeungbook}. Its facets are
the hyperplanes determined by the basic submodular inequalities 
$\delta_f(iK,jK)\ge 0$ with distinct $i,j\in M{-}K$ and $K\subseteq M$ 
($K$ can be empty), and the monotonicity requirements $f(M)\ge f(M{-}i)$;
see \cite[Theorem
2]{fmadhe}. Much less is known about the extremal rays of this cone. They
have been computed for ground sets up to five elements \cite{studeny-kocka}
without indicating any structural property. Fixing a polymatroid on each
extremal ray, every polymatroid in $\Gamma_{\!M}$ is a non-negative linear
combination (also called \emph{conic combination}) of these extremal
polymatroids.

\subsection{Flats, modular cuts and filters}

Let $\M=(f,M)$ be a fixed polymatroid. A subset $F\subseteq M$ is a
\emph{flat} if every proper superset of $F$ has strictly larger rank. The
\emph{closure of $A$}, denoted by $\cl(A)$, is the smallest flat containing
$A$. The collection $\F$ of flats is a \emph{modular cut}
if it has properties (i)--(iii) below:
\ITEM{(i)} closed upwards: if $F\in\F$ and the flat $F'$ is a superset of 
$F$, then $F'\in\F$;
\ITEM{(ii)} closed for modular intersection: if $F_1,F_2\in\F$ and
$(F_1,F_2)$ is a modular pair (that is, $f(F_1,F_2)=0$), then 
$F_1\cap F_2\in\F$ (observe that intersection of flats is a flat);
\ITEM{(iii)} not empty, which is equivalent to $M\in\F$.

\smallskip
In standard textbooks, such as \cite{oxley}, the empty collection is also 
considered to be a modular cut. It has been excluded here to emphasize the
similarity to modular filters defined below.

The modular cut \emph{generated by the flats $F_1,\dots,F_k$} is the
smallest modular cut containing all of these sets. This modular
cut is denoted by $\F(F_1,\dots,F_k)$.

A modular cut $\F$ is \emph{principal} if it is generated by a single
flat; or, equivalently, if the intersection of all elements of $\F$
is also an element of $\F$. When $\F$ is not principal, there are two 
flats $F_1, F_2\in\F$ such that $F_1\cap F_2\notin\F$. In this case
$F_1\cap F_2\notin \F(F_1,F_2)$ as the modular cut generated by $F_1$ and
$F_2$ is a subcollection of $\F$.

\smallskip

The collection $\G$ of subsets of $M$ is a \emph{modular filter} if it
satisfies the following properties:
\ITEM{(i)}  closed upwards: if $A\in\G$, $A\subseteq B$, then $B\in\G$;
\ITEM{(ii)} closed for modular intersection: if $A,B\in\G$ and $(A,B)$ is a
modular pair, then $A\cap B\in\G$;
\ITEM{(iii)} non-trivial: if $f(X)=f(M)$, then $X\in\G$.

\smallskip
Modular filters generated by certain subsets as well as principal and
non-principal modular filters can be defined similarly to modular cuts.

The following Proposition shows how to get a modular filter from a modular
cut.

\begin{proposition}\label{prop:cut-to-filter}
Suppose $\F$ is a modular cut. The collection 
$\cl^{-1}(\F)=\{ A\subseteq M: \cl(A)\in\F \}$ is a modular filter.
\end{proposition}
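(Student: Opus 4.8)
The plan is to check, one by one, the three defining properties of a modular filter for the collection $\G:=\cl^{-1}(\F)$. The single fact I will lean on throughout is that closure preserves rank, i.e. $f(\cl(A))=f(A)$ for every $A\subseteq M$. This is standard and follows from submodularity: if $f(Ax)=f(A)$ and $f(Ay)=f(A)$ then submodularity and monotonicity force $f(Axy)=f(A)$, so the set $A^{+}$ of all $x$ with $f(Ax)=f(A)$ has rank $f(A)$; one checks that $A^{+}$ is a flat and that every flat containing $A$ contains $A^{+}$ (again a two-line submodularity computation), so $A^{+}=\cl(A)$. I will also use, directly from the definitions, that a proper superset of a flat has strictly larger rank and that an intersection of flats is again a flat.

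Properties (i) and (iii) are quick. For (i): if $A\in\G$ and $A\subseteq B$ then $\cl(A)\subseteq\cl(B)$, and since $\cl(B)$ is a flat containing the member $\cl(A)$ of $\F$, upward closure of the modular cut $\F$ gives $\cl(B)\in\F$, i.e. $B\in\G$. For (iii): if $f(X)=f(M)$ then by monotonicity every set between $X$ and $M$ has rank $f(M)$, so no flat can lie strictly between $X$ and $M$ — such a flat would have the proper superset $M$ of the same rank, contradicting the definition of a flat. Hence $\cl(X)=M$, and $M\in\F$ because a modular cut is non-empty, so $X\in\G$.

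The real content is property (ii). Suppose $A,B\in\G$ with $\delta_f(A,B)=0$, and set $F_1=\cl(A)$, $F_2=\cl(B)$, so $F_1,F_2\in\F$. First I would show $(F_1,F_2)$ is a modular pair: combining $f(F_1)=f(A)$ and $f(F_2)=f(B)$ with monotonicity ($A\cup B\subseteq F_1\cup F_2$ and $A\cap B\subseteq F_1\cap F_2$) and submodularity,
$$ f(F_1)+f(F_2)\ \ge\ f(F_1\cup F_2)+f(F_1\cap F_2)\ \ge\ f(A\cup B)+f(A\cap B)\ =\ f(A)+f(B)\ =\ f(F_1)+f(F_2), $$
so all these quantities coincide; in particular $\delta_f(F_1,F_2)=0$ and $f(F_1\cap F_2)=f(A\cap B)$. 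Since $f(A\cap B)=f(\cl(A\cap B))$ and $\cl(A\cap B)\subseteq F_1\cap F_2$ are flats of equal rank, they are equal. Now clause (ii) of the modular cut applies to the modular pair $F_1,F_2\in\F$ and yields $F_1\cap F_2\in\F$, that is $\cl(A\cap B)\in\F$, that is $A\cap B\in\G$, as required.

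The one place to be careful is precisely this last identification. It is tempting to stop at ``$F_1\cap F_2\in\F$ and $\cl(A\cap B)\subseteq F_1\cap F_2$'', but upward closure of $\F$ runs the wrong way for that, so it does not yield $\cl(A\cap B)\in\F$. The equality $\cl(A\cap B)=F_1\cap F_2$ is what actually closes the argument, and it is the rank identity $f(F_1\cap F_2)=f(A\cap B)$ — a byproduct of the modular-pair computation together with $f\circ\cl=f$ — that forces it. Everything else is routine bookkeeping.
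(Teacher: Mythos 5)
Your proof is correct and follows essentially the same route as the paper's: both arguments reduce to showing that $(\cl(A),\cl(B))$ is again a modular pair and that $f(\cl(A)\cap\cl(B))=f(A\cap B)$, which forces $\cl(A\cap B)=\cl(A)\cap\cl(B)\in\F$. You merely expand the paper's one-line identity $\delta_f(\cl(A),\cl(B))=\delta_f(A,B)+f(A\cap B)-f(\cl(A)\cap\cl(B))$ into a chain of inequalities and spell out the routine verifications of (i), (iii), and $f\circ\cl=f$ that the paper leaves implicit.
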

\begin{proof}
Properties (i) and (iii) of the definition of a modular filter
clearly hold, thus suppose $(A,B)$ is a modular pair
and both $F_A=\cl(A)$ and $F_B=\cl(B)$ are in $\F$. Then
$\cl(F_AF_B)=\cl(AB)$ and $F_A\cap F_B\supseteq A\cap B$, thus
\begin{align*}
  & f(F_A, F_B)=f(A, B)+f(A\cap B)-f(F_A\cap F_B)\le{}\\
  & ~ \le f(A,B)=0.
\end{align*}
Consequently $(F_A,F_B)$ is a modular pair and $f(A\cap B)=f(F_A\cap F_B)$,
thus $\cl(A\cap B)=F_A\cap F_B\in\F$, and $A\cap B\in\cl^{-1}(\F)$.
\end{proof}

\subsection{Extensions}

The polymatroid $\M'=(f',M')$ is an \emph{extension} of $\M=(f,M)$ if 
$M'\supset M$, and $f(X)=f'(X)$ for all $X\subseteq M$.
This is a \emph{one-point extension} if $M'{-}M$ has a single element.

Given a polymatroid $(f,M)$ and extensions $(f_1,M_1)$ and $(f_2,M_2)$ of
$(f,M)$ with $M_1{-}M$ and $M_2{-}M$ disjoint, an \emph{amalgam} is a
polymatroid on $M_1\cup M_2$ that extends both $(f_1,M_1)$ and $(f_2,M_2)$.
The polymatroid $\M$ is \emph{sticky} if any two of its extensions
have an amalgam. 

The function $e$ defined on the set of subsets of $M$ is an \emph{excess function of
$\M=(f,M)$} if there is a one-point extension $\M'=(f',M\cup\{x\})$ of $\M$
such that $e(A)=f'(xA)-f'(A)$ for all $A\subseteq M$. If the polymatroid
$\M$ is clear from the context, $e$ is called simply an \emph{excess function}.

\begin{proposition}\label{prop:excess-properties}
The function $e$ is an excess function if and only if the following 
conditions hold.
\ITEM{(i)}
   $e$ is non-negative and decreasing: $e(A)\ge e(B)\ge 0$ for
   $A\subseteq B\subseteq M$,
\ITEM{(ii)}
   $\big(e(M)-e(M{-}i)\big) + \big(f(M)-f(M{-}i)\big) \ge 0$
   for all $i\in M$,
\ITEM{(iii)}
   $\delta_e(aA,bA)+\delta_f(aA,bA)\ge 0$ for all $abA\subseteq M$,
   $a,b\notin A$ (including $A=\emptyset$).
\end{proposition}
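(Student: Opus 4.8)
The plan is to work with the function $f'$ on the subsets of $M\cup\{x\}$ that is forced by $f$ and $e$: necessarily $f'(A)=f(A)$ for $A\subseteq M$ and $f'(xA)=f(A)+e(A)$ for $A\subseteq M$. By the definition of an excess function, $e$ is an excess function of $\M$ precisely when this (uniquely determined) $f'$ is a polymatroid. Since $f'(\emptyset)=f(\emptyset)=0$, and by the facet description of $\Gamma_{\!M\cup\{x\}}$ recalled in Section~\ref{sec:definition} (with non-negativity then automatic), the task reduces to showing that conditions (i)--(iii) are together equivalent to the conjunction of the local submodular inequalities $\delta_{f'}(iK,jK)\ge 0$ for $K\subseteq M\cup\{x\}$ and distinct $i,j\notin K$, together with the top monotonicity inequalities $f'(M\cup\{x\})\ge f'((M\cup\{x\}){-}i)$.

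For the ``only if'' direction I would take $f'$ to be the rank function of the given one-point extension and read the three conditions off directly. The ``decreasing'' half of (i) is $\delta_{f'}(xA,B)\ge 0$ for $A\subseteq B$: here $xA\cup B=xB$ and $xA\cap B=A$, so this inequality rearranges to $e(A)\ge e(B)$; the ``non-negative'' half is the monotonicity $f'(xA)\ge f'(A)$. Condition (ii) is the top monotonicity inequality at $i\in M$ after substituting $f'(xY)-f'(Y)=e(Y)$ and $f'(M)-f'(M{-}i)=f(M)-f(M{-}i)$, while $e(M)\ge 0$ is the top monotonicity inequality at $x$. Condition (iii) is $\delta_{f'}(xaA,xbA)\ge 0$, whose $x$-terms contribute $\delta_e(aA,bA)$ and whose remaining terms contribute $\delta_f(aA,bA)$. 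Each of these is a single rearrangement.

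For the ``if'' direction I would check that the $f'$ built from $e$ satisfies the inequalities listed above. Top monotonicity at $x$ is $e(M)\ge 0$, part of (i); at $i\in M$ it unwinds to condition (ii). For the local submodular inequality $\delta_{f'}(iK,jK)\ge 0$ there are three cases depending on how $x$ meets $K\cup\{i,j\}$. If $x\notin K\cup\{i,j\}$ the inequality is just $\delta_f(iK,jK)\ge 0$, which holds because $f$ is a polymatroid. If $x\in\{i,j\}$, say $i=x$ (so $K\subseteq M$), the inequality collapses to $e(K)-e(jK)\ge 0$, the ``decreasing'' part of (i). If $x\in K$, write $K=xA$ with $A\subseteq M$; then the inequality becomes $\delta_f(iA,jA)+\delta_e(iA,jA)\ge 0$, which is exactly condition (iii). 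Hence $f'\in\Gamma_{\!M\cup\{x\}}$, it extends $\M$, and $e$ is its excess function.

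I do not expect a real obstacle: the content is the bookkeeping in the two paragraphs above, and the one point worth flagging is that I am relying on the facet description of the submodular cone, so that only the local submodular inequalities and the top monotonicity inequalities need be verified rather than full submodularity and monotonicity of $f'$ checked by hand.
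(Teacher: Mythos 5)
Your proposal is correct and follows essentially the same route as the paper: reduce to the forced $f'$ on $M\cup\{x\}$ and invoke the facet description of the polymatroid cone (the paper cites \cite[Theorem 2]{fmadhe} for exactly this) so that only the local submodular inequalities and top monotonicity need checking, after which the case analysis you spell out is the routine bookkeeping the paper leaves implicit.
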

\begin{proof}
It is clear that the conditions are necessary. For sufficiency, it is enough
the check that $f'$ defined on subsets of $Mx$ as $f'(Ax)=f(A)+e(A)$ and 
$f'(A)=f(A)$
is the rank function of a polymatroid. According to \cite[Theorem 2]{fmadhe}
it is enough to check $f'(M)\ge f'(M{-}i)$ for $i\in Mx$ and $\delta_{f'}(aA,bA)\ge 0$ 
for $a,b\in Mx{-}A$. These inequalities follow easily from the listed 
conditions.
\end{proof}

The identically zero function clearly satisfies these assumptions, thus
it is an excess function. Actually, it adds a \emph{loop} to the polymatroid.
The inequality $\delta_e(A,B)+\delta_f(A,B)\ge 0$ holds for arbitrary subsets $A, B
\subseteq M$ as this is just the modular defect of the pair $(Ax, Bx)$ in the
extension.

\smallskip

The following statements connect one-point extensions, modular cuts and
modular filters.

\begin{proposition}\label{filter-extension}
The collection $\G$ of subsets of $M$ is a modular filter if and only if there is an
excess function $e$ such that $e(M)=0$ and $\G=\{A\subseteq M: e(A)=0\}$.
\end{proposition}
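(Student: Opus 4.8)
The plan is to prove the two implications separately. The direction ``excess function $\Rightarrow$ modular filter'' is essentially a verification, so I would begin there. Suppose $e$ is an excess function with $e(M)=0$ and set $\G=\{A\subseteq M:e(A)=0\}$. Property (i) of a modular filter is immediate from Proposition \ref{prop:excess-properties}(i): since $e$ is non-negative and decreasing, $A\subseteq B$ together with $e(A)=0$ forces $e(B)=0$. For (ii), let $A,B\in\G$ with $\delta_f(A,B)=0$; the inequality $\delta_e(A,B)+\delta_f(A,B)\ge0$ noted right after Proposition \ref{prop:excess-properties} gives $\delta_e(A,B)\ge0$, while $e(A)=e(B)=0$ makes $\delta_e(A,B)=-e(A\cup B)-e(A\cap B)\le0$; hence $e(A\cap B)=0$, i.e. $A\cap B\in\G$. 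For (iii), if $f(X)=f(M)$ then monotonicity of the one-point extension $f'$ witnessing that $e$ is an excess function gives $f(X)+e(X)=f'(Xx)\le f'(Mx)=f(M)+e(M)=f(M)$, so $0\le e(X)\le f(M)-f(X)=0$ and $X\in\G$; in particular $M\in\G$.

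For the converse I would construct $e$ directly. Let $\G$ be a modular filter; by property (iii) of its definition $M\in\G$. Put $e(A)=0$ for $A\in\G$ and $e(A)=\eps$ for $A\notin\G$, where $\eps>0$ is a constant to be chosen later. Then $e(M)=0$ and $\{A:e(A)=0\}=\G$ by construction, $e$ is non-negative, and $e$ is decreasing because $\G$ is closed upwards. It remains to pick $\eps$ small enough that conditions (ii) and (iii) of Proposition \ref{prop:excess-properties} hold. Condition (ii) is automatic when $M-i\in\G$; when $M-i\notin\G$, non-triviality of $\G$ forces $f(M-i)\ne f(M)$, hence $f(M)-f(M-i)>0$, and any $\eps\le f(M)-f(M-i)$ works. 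For condition (iii), fix $a\ne b$ with $a,b\notin A$; since $aA\subseteq abA$ and $bA\subseteq abA$, upward closure puts $abA$ in $\G$ as soon as one of $aA,bA$ is in $\G$. Running through the possible memberships of $A,aA,bA,abA$, one finds $\delta_e(aA,bA)\ge0$ in every case except when $A\notin\G$ while $aA,bA\in\G$, where $\delta_e(aA,bA)=-\eps$. But in that case $(aA,bA)$ cannot be a modular pair, for otherwise property (ii) of $\G$, applied to $aA,bA\in\G$, would put $aA\cap bA=A$ into $\G$; hence $\delta_f(aA,bA)>0$, and $\eps\le\delta_f(aA,bA)$ restores $\delta_e(aA,bA)+\delta_f(aA,bA)\ge0$. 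Taking $\eps$ below the finitely many strictly positive numbers $f(M)-f(M-i)$ (over $i$ with $M-i\notin\G$) and $\delta_f(aA,bA)$ (over the exceptional configurations above) therefore makes $e$ an excess function with all the required properties.

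The one genuinely delicate point is condition (iii) in the second half, and the key observation there is that the only membership pattern of $A,aA,bA,abA$ that makes $\delta_e(aA,bA)$ negative is precisely the one in which the modular-intersection axiom of $\G$ forbids $(aA,bA)$ from being a modular pair; consequently the submodular defect $\delta_f(aA,bA)$ is strictly positive and leaves room to absorb the $-\eps$. Everything else reduces to routine manipulation of the monotonicity and submodularity inequalities, together with the fact (used in the first half) that an excess function arises from an honest one-point extension.
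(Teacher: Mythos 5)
Your proof is correct and follows essentially the same route as the paper: the same $\{0,\eps\}$-valued construction for the converse, with $\eps$ chosen below the positive quantities $f(M)-f(M{-}i)$ and the positive modular defects, and the same observation that the only membership pattern making $\delta_e(aA,bA)=-\eps$ forces $(aA,bA)$ to be non-modular. Your forward direction merely spells out what the paper dismisses as trivial (including a clean monotonicity argument for the non-triviality axiom), so there is nothing to add.
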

\begin{proof}
If $e$ is an excess function with $e(M)=0$, then Proposition
\ref{prop:excess-properties} and
 $\delta_e(A,B)+\delta_f(A,B)\ge 0$ trivially imply that $\G$ is a modular
filter. To show the converse, let $\G$ be a modular filter, and choose the
function $e$ as $e(A)=0$ for $A\in \G$, and $e(A)=\eps$ otherwise
where $\eps$ is a sufficiently small positive value. We claim that conditions 
(i)--(iii) of Proposition \ref{prop:excess-properties} hold. This is clear
for (i). For (ii) observe that
$e(M)-e(M{-}i)$ is either $0$ or $-\eps$, and the latter holds when
$M{-}i\notin \G$, but then $f(M{-}i)\not=f(M)$.
Thus choosing $\eps$ smaller than all positive $f(M)-f(M{-}i)$ ensures condition (ii).

Finally, $\delta_e(aA,bA)$ is either non-negative or equals $-\eps$. This latter
happens when both $aA$ and $bA$ are in $\G$ but $A\notin\G$. In this case $(aA,
bA)$ is not a modular pair. Choosing $\eps$ smaller than all possible positive
modular defects in the polymatroid gives condition (iii).
\end{proof}

\begin{claim}\label{claim:modular-cuts}
The following statements are equivalent:
\ITEM{(i)}  The polymatroid $\M$ has a non-principal modular cut.
\ITEM{(ii)} The polymatroid has flats $F_1$ and $F_2$ such that
         $F_1\cap F_2\notin\F(F_1,F_2)$.
\ITEM{(iii)} The polymatroid has flats $F_1$ and $F_2$ and an
       excess function $e$ such that $e(F_1)=e(F_2)=0$ and $e(F_1\cap
       F_2)>0$.
\end{claim}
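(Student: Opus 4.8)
The plan is to prove the three statements equivalent by the cycle $(i)\Rightarrow(ii)\Rightarrow(iii)\Rightarrow(i)$. The first implication is essentially the remark made just before the statement: if $\F$ is a non-principal modular cut there are flats $F_1,F_2\in\F$ with $F_1\cap F_2\notin\F$, and since $\F(F_1,F_2)\subseteq\F$ we also have $F_1\cap F_2\notin\F(F_1,F_2)$. So $(i)$ yields $(ii)$ with essentially no work.

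For $(ii)\Rightarrow(iii)$ I would route $\F(F_1,F_2)$ through the two propositions just proved. By Proposition \ref{prop:cut-to-filter}, $\G=\cl^{-1}(\F(F_1,F_2))$ is a modular filter, so by Proposition \ref{filter-extension} there is an excess function $e$ with $e(M)=0$ and $\G=\{A\subseteq M:e(A)=0\}$. Since $F_1$ and $F_2$ are flats lying in $\F(F_1,F_2)$, they lie in $\G$, hence $e(F_1)=e(F_2)=0$; and since $F_1\cap F_2$ is a flat (intersection of flats is a flat) that is \emph{not} in $\F(F_1,F_2)$, it is not in $\G$, hence $e(F_1\cap F_2)>0$.

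For $(iii)\Rightarrow(i)$, start from flats $F_1,F_2$ and an excess function $e$ with $e(F_1)=e(F_2)=0<e(F_1\cap F_2)$, and set $\F=\{F:F\text{ is a flat and }e(F)=0\}\cup\{M\}$. I would check that $\F$ is a modular cut: it is upward closed because $e$ is decreasing; it contains $M$, so it is nonempty; and it is closed under modular intersection because if flats $G_1,G_2\in\F$ form a modular pair and neither of them is $M$, then $e(G_1)=e(G_2)=0$, so combining $\delta_e(G_1,G_2)+\delta_f(G_1,G_2)\ge 0$ with $\delta_f(G_1,G_2)=0$ and $e\ge 0$ forces $\delta_e(G_1,G_2)=0$, hence $e(G_1\cap G_2)=0$, that is, $G_1\cap G_2\in\F$ (the case when one of $G_1,G_2$ equals $M$ being trivial). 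Now $F_1,F_2\in\F$ while $F_1\cap F_2\notin\F$: it is not in the flat part since $e(F_1\cap F_2)>0$, and it cannot equal $M$, since that would force $e(F_1)=e(M)=0=e(F_1\cap F_2)$. Consequently $\F$ is non-principal: the intersection $F_0$ of all members of $\F$ satisfies $F_0\subseteq F_1\cap F_2$, and were $F_0\in\F$, upward closure would put the flat $F_1\cap F_2$ into $\F$ as well, a contradiction.

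The only point needing care is in $(iii)\Rightarrow(i)$: the excess function handed to us there need not vanish on $M$, so Proposition \ref{filter-extension} cannot be applied in reverse, and I have to adjoin $M$ to the zero set of $e$ by hand. The step to watch is that this adjunction preserves closure under modular intersection — which it does, since the only new pairs are of the form $(G,M)$ and these intersect to $G\in\F$. The computation that an excess function vanishing on two flats of a modular pair must vanish on their intersection (the substance of that check) is the same mechanism that drives Proposition \ref{prop:cut-to-filter}.
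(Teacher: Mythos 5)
Your proof is correct and follows essentially the same route as the paper: the same cycle $(i)\Rightarrow(ii)\Rightarrow(iii)\Rightarrow(i)$, with $(ii)\Rightarrow(iii)$ routed through Propositions \ref{prop:cut-to-filter} and \ref{filter-extension} and the same $\delta_f+\delta_e\ge 0$ computation in the last step. The one place you diverge---adjoining $M$ by hand in $(iii)\Rightarrow(i)$---is harmless but unnecessary, since $e$ is decreasing and non-negative, so $0\le e(M)\le e(F_1)=0$ forces $e(M)=0$ and $M$ already lies in your flat part.
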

\begin{proof}
(i) $\rightarrow$ (ii) If $\F$ is not a principal cut, then there are
$F_1,F_2\in\F$ such that $S=F_1\cap F_2\notin\F$. As $\F(F_1,F_2)$ is a
subcollection of $\F$, $S\notin\F(F_1,F_2)$.

(ii) $\rightarrow$ (iii) Let $\F=\F(F_1,F_2)$. By Proposition
\ref{prop:cut-to-filter}, $\G=\cl^{-1}(\F)$ is a modular filter, and by
Proposition \ref{filter-extension} there is an excess function $e$ such that
$e(A)=0$ for $A\in\G$, and $e(A)>0$ otherwise. As $F_1,F_2\in\G$, the first
required property holds, and the second also holds if we show that $S=F_1\cap
F_2\notin \G$. But $S$ is a flat, $S\notin\F$, thus $S\notin\cl^{-1}(\F)=\G$.

(iii) $\rightarrow$ (i) Let $e$ be the excess function, and consider the
collection of flats $\F=\{F: e(F)=0\}$. Clearly, $F_1,F_2\in\F$ and $F_1\cap
F_2\notin\F$. We claim that $\F$ is a modular cut. The fact that it is 
non-principal is clear. Property (i) and (iii)
clearly hold. For (ii) observe that
$\delta_f(A,B)+\delta_e(A,B)\ge 0$, thus if $A,B\in\F$, then $e(A)=e(B)=e(A\cup
B)=0$ (as $e$ is decreasing and non-negative), and if $(A,B)$ is a modular
pair, that is, $\delta_f(A,B)=0$, then
$$
   0\le \delta_f(A,B)+\delta_e(A,B) = -e(A\cap B),
$$
meaning that $e(A\cap B)=0$, thus the flat $A\cap B$ is in $\F$.
\end{proof}


\section{Linear polymatroids}\label{sec:linear}

The polymatroid $\M=(f,M)$ is \emph{linearly representable} if there is a
(finite dimensional) vector space $V$ over some finite field and for each
$i\in M$ a linear subspace $V_i$ of $V$ such that for all $A\subseteq M$,
the rank of $A$ is the dimension of the subspace spanned by $V_A=\bigcup\{
V_i:i\in A\}$; see \cite{oxley}.

A linearly representable polymatroid is clearly integer, and there are
linearly representable polymatroids whose sum is not linearly representable.
Frequently when linearly representable polymatroids have some interesting
(or desired) property, so do polymatroids in their conic hull.
The definition of linear polymatroids below
illustrates such a case. As it captures one of most important aspect of
linear representability, we hope that this notion has other
applications.

\smallskip

Subsets $X$, $Y$ of the ground set are \emph{intersectable} if either they
form a modular pair, or there is an excess function
$e$ such that $e(X)=e(Y)=0$ but $e(X\cap Y)>0$ or, equivalently, if $X\cap
Y\notin\G(X,Y)$, see Proposition \ref{filter-extension}. The polymatroid is
\emph{linear} if every pair of its subsets is intersectable.

\begin{claim}\label{claim:linearly-representable}
Linearly representable polymatroids are linear.
\end{claim}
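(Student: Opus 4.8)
The plan is to take an arbitrary linear representation and, given two subsets $X$ and $Y$ of the ground set, produce an explicit one-point extension whose excess function vanishes on $X$ and $Y$ but is positive on $X\cap Y$ — unless $(X,Y)$ is already a modular pair, in which case there is nothing to prove. So assume $\delta_f(X,Y)>0$. Let $V$ be the ambient space and $V_i\le V$ the subspaces representing the elements $i\in M$; write $V_A$ for the span of $\bigcup_{i\in A}V_i$. The key geometric fact is that $f(X)+f(Y)-f(X\cup Y)=\dim V_X+\dim V_Y-\dim(V_X+V_Y)=\dim(V_X\cap V_Y)$, whereas $f(X\cap Y)=\dim V_{X\cap Y}$ and $V_{X\cap Y}\subseteq V_X\cap V_Y$; the hypothesis $\delta_f(X,Y)>0$ says exactly that this containment is strict, so $\dim(V_X\cap V_Y)>\dim V_{X\cap Y}$.

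First I would pick a vector $v\in (V_X\cap V_Y)\setminus V_{X\cap Y}$ (possible after enlarging the field if needed, though since the field is finite one can instead pass to a larger finite extension, or simply note the dimension count guarantees such a $v$ over the given field already). Define the new element $x$ to be represented by the one-dimensional subspace $\langle v\rangle$, extending the representation to $M\cup\{x\}$, and let $f'$ be the resulting linearly representable polymatroid on $M\cup\{x\}$, with excess function $e(A)=f'(xA)-f'(A)=\dim(V_A+\langle v\rangle)-\dim V_A$. This $e$ is $0$ precisely when $v\in V_A$ and $1$ otherwise. Since $v\in V_X\cap V_Y$ we get $e(X)=e(Y)=0$, and since $v\notin V_{X\cap Y}$ we get $e(X\cap Y)=1>0$, which is exactly the witness required for $X$ and $Y$ to be intersectable. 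As this works for every pair of subsets, the polymatroid is linear.

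The one point that needs a little care is the field: the vector $v$ must exist over the field used in the representation. The clean way around this is to observe that linear representability over a finite field implies linear representability over any finite extension (tensor up the whole configuration), and the dimensions of all the relevant spans are unchanged; over a sufficiently large extension $(V_X\cap V_Y)\setminus V_{X\cap Y}$ is certainly nonempty whenever $\dim(V_X\cap V_Y)>\dim V_{X\cap Y}$. Alternatively, since these are genuine subspaces over the given finite field $\mathbb{F}_q$ with $\dim(V_X\cap V_Y)>\dim V_{X\cap Y}$, the set difference already contains $q^{\dim(V_X\cap V_Y)}-q^{\dim V_{X\cap Y}}>0$ vectors, so no extension is even needed. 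I expect this field bookkeeping — rather than any substantive difficulty — to be the only place the argument could stumble; the heart of the proof is simply the identity $\delta_f(X,Y)=\dim(V_X\cap V_Y)-\dim V_{X\cap Y}$ together with choosing $x$ to point along a vector in the gap.
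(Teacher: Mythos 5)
Your proof is correct and is essentially the paper's argument: both adjoin a new element living inside $V_X\cap V_Y$ but outside $V_{X\cap Y}$ and read off the excess function. The only (immaterial) difference is that the paper represents the new element by the whole subspace $V_X\cap V_Y$, making $e(X\cap Y)$ equal to the full modular defect, whereas you take a single vector in the gap, giving $e(X\cap Y)=1$; either suffices, and your observation that no field extension is needed is right.
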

\begin{proof}
Suppose $\M=(f,M)$ is linearly representable; let $V_i\subseteq V$ be the
linear subspace assigned to $i\in M$. For $A\subseteq M$, its rank is the
dimension of the subspace spanned by $V_A$. If the subsets $X$ and $Y$ are
not modular, adjoin a new element to $\M$ represented by the intersection
of the linear span of $V_X$ and the linear span of $V_Y$. Let $e$ be the
excess function of this one-point extension. Then $e(X)=e(Y)=0$, and
$e(X\cap Y)$ equals the modular defect of $X$ and $Y$, which is non-zero.
\end{proof}

\begin{claim}\label{claim:linear-adds-up}
Conic combination of linear polymatroids is linear.
\end{claim}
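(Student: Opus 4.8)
The plan is to reduce the statement to a single claim about excess functions: if $\M_1=(f_1,M)$ and $\M_2=(f_2,M)$ are linear polymatroids on the same ground set and $\lambda_1,\lambda_2\ge 0$, then $\M=(\lambda_1 f_1+\lambda_2 f_2, M)$ is linear. The general conic combination of finitely many linear polymatroids then follows by induction on the number of summands (a nonnegative multiple of a linear polymatroid is trivially linear, since scaling the rank function scales every modular defect and every excess function by the same factor, preserving the sign conditions in Proposition~\ref{prop:excess-properties}). So fix $X,Y\subseteq M$; I must show $X$ and $Y$ are intersectable in $\M$. If $\delta_f(X,Y)=0$ we are done, so assume $\delta_f(X,Y)>0$. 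Since $\delta_f=\lambda_1\delta_{f_1}+\lambda_2\delta_{f_2}$, at least one of $\delta_{f_1}(X,Y)$, $\delta_{f_2}(X,Y)$ is strictly positive.

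The key step is to combine the witnessing excess functions additively. Because $\M_1$ is linear, there is an excess function $e_1$ of $\M_1$ with $e_1(X)=e_1(Y)=0$ and $e_1(X\cap Y)\ge 0$, which is strictly positive whenever $\delta_{f_1}(X,Y)>0$ (and if $\delta_{f_1}(X,Y)=0$ we may simply take $e_1\equiv 0$, which is an excess function of any polymatroid). Symmetrically pick $e_2$ for $\M_2$. Now set $e=\lambda_1 e_1+\lambda_2 e_2$. I claim $e$ is an excess function of $\M$: conditions (i)--(iii) of Proposition~\ref{prop:excess-properties} for $\M$ read, after substituting $f=\lambda_1 f_1+\lambda_2 f_2$ and $e=\lambda_1 e_1+\lambda_2 e_2$, as nonnegative linear combinations (with coefficients $\lambda_1,\lambda_2$) of the corresponding conditions for $(\M_1,e_1)$ and $(\M_2,e_2)$; monotonicity and decrease in (i) are preserved under nonnegative combinations, and (ii),(iii) are preserved because each term $\delta_{e_k}(aA,bA)+\delta_{f_k}(aA,bA)$ and each $(e_k(M)-e_k(M{-}i))+(f_k(M)-f_k(M{-}i))$ is nonnegative. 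Finally $e(X)=\lambda_1 e_1(X)+\lambda_2 e_2(X)=0$, likewise $e(Y)=0$, and $e(X\cap Y)=\lambda_1 e_1(X\cap Y)+\lambda_2 e_2(X\cap Y)>0$ because the summand corresponding to whichever of $\delta_{f_1},\delta_{f_2}$ is positive contributes a strictly positive term while the other is nonnegative. Hence $X$ and $Y$ are intersectable in $\M$, and $\M$ is linear.

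The only subtlety — and the step I would be most careful about — is the normalization of the witnessing excess functions. The definition of ``intersectable'' only guarantees $e_k(X\cap Y)>0$, not any particular value, and a priori $e_k$ could be ``too large'' and violate a condition for $\M$ when mixed with $f$ of the \emph{other} summand. This is not actually a problem because conditions (ii) and (iii) for $(\M_k,e_k)$ hold on their own, so each $\lambda_k$-weighted block is already nonnegative and adding them only helps; there is no cross-term between $e_1$ and $f_2$. But if one wanted to be extra safe one could replace $e_1$ by $t e_1$ for small $t>0$: scaling an excess function of $\M_1$ by $t\in(0,1]$ keeps it an excess function (again each defining inequality is positively homogeneous of degree one in $(f_1,e_1)$ jointly, but here only $e_1$ scales — so one checks that shrinking $e_1$ toward $0$ preserves (i) trivially and preserves (ii),(iii) since $\delta_{e_1}$ enters linearly and $e_1\equiv 0$ satisfies them). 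I would state the argument in the clean form above, noting parenthetically that $e\equiv 0$ is always an excess function so the ``$\delta_{f_k}(X,Y)=0$'' cases are handled uniformly.
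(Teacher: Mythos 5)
Your proof is correct and follows essentially the same route as the paper: choose $e_i\equiv 0$ when $(X,Y)$ is modular in $\M_i$ and the witnessing excess function otherwise, then use the (weighted) sum $\lambda_1e_1+\lambda_2e_2$, which remains an excess function because all conditions in Proposition~\ref{prop:excess-properties} are linear. Your version just spells out the verification and the normalization worry in more detail than the paper does.
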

\begin{proof}
The definition of linear polymatroids is clearly invariant under
multiplication. So suppose $\M_1$ and $\M_2$ are defined on the 
same ground set and both are linear. If
$(X,Y)$ is modular in $\M_i$, then let $e_i$ be identically zero,
otherwise let it be the excess function guaranteed by linearity. 
If $(X,Y)$ is not modular in $\M_1+\M_2$ then $e_1+e_2$ is the
excess function showing the required extension.
\end{proof}

\begin{claim}\label{claim:linear-is-ok}
Linear polymatroids satisfy $(*)$.
\end{claim}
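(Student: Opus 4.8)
The plan is to reduce the statement to Claim~\ref{claim:modular-cuts}: by the implication (iii)$\,\to\,$(i) there, it suffices to produce two flats $F_1,F_2$ together with an excess function $e$ satisfying $e(F_1)=e(F_2)=0$ and $e(F_1\cap F_2)>0$. So I would start from the hypothesis that the linear polymatroid $\M$ has a non-modular pair of flats, and let $F_1,F_2$ be such a pair; thus $\delta_f(F_1,F_2)>0$, and in particular $(F_1,F_2)$ is not a modular pair.

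Now apply linearity to the pair of subsets $X=F_1$, $Y=F_2$. By definition $X$ and $Y$ are intersectable, so one of the two alternatives in the definition holds; since $(F_1,F_2)$ is not modular, it must be the second one, i.e.\ there is an excess function $e$ with $e(F_1)=e(F_2)=0$ while $e(F_1\cap F_2)>0$. As the intersection of two flats is again a flat, $F_1\cap F_2$ is a flat, and the triple $(F_1,F_2,e)$ is exactly the certificate required by Claim~\ref{claim:modular-cuts}(iii). Hence $\M$ has a non-principal modular cut, which is the conclusion of $(*)$.

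There is no real obstacle here: once linearity has been phrased in terms of excess functions and the equivalence of Claim~\ref{claim:modular-cuts} is in place, the argument is a single unwinding of definitions. The only point that needs a word of care is that the alternatives in the definition of \emph{intersectable} are stated for arbitrary subsets $X,Y$, whereas Claim~\ref{claim:modular-cuts}(iii) wants $F_1$ and $F_2$ to be flats -- but one is free to take $X,Y$ to be the given non-modular pair of flats, and closure of flats under intersection handles $F_1\cap F_2$. So the content of the claim is essentially bookkeeping: the excess function that linearity supplies for a non-modular pair of flats is precisely a witness of a non-principal modular cut.
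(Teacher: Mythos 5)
Your proposal is correct and follows exactly the paper's own argument: take a non-modular pair of flats, invoke linearity to obtain the excess function vanishing on $F_1$ and $F_2$ but not on $F_1\cap F_2$, and conclude via the implication (iii)$\to$(i) of Claim~\ref{claim:modular-cuts}. The extra remarks about which alternative of ``intersectable'' applies and about flats being closed under intersection are harmless elaborations of the same one-line unwinding.
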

\begin{proof}
Suppose $(F_1,F_2)$ is a non-modular pair of flats, we need to find a
non-principal modular cut in the polymatroid. As the polymatroid is linear,
there is an excess function $e$ with $e(F_1)=e(F_2)=0$ and $e(F_1\cap
F_2)>0$, and then the existence of non-principal modular cut follows from
Claim \ref{claim:modular-cuts}.
\end{proof}


\section{Main lemmas}\label{sec:main-lemmas}

A non-negative linear (conic) combination of polymatroids on the same set $M$
is again a polymatroid on $M$. If $F$ is a flat in any constituent with
positive coefficient, then $F$ is a flat in the sum; however the sum can have
flats which are not flats in any of the constituents. The next lemmas
establish properties of the constituents when their conic combination violates
$(*)$.

\begin{lemma}\label{lemma:flats}
Let $\M$ and $\N$ be two polymatroids on the same set.
Suppose $\M$ has two flats $F_1, F_2$ such that $F_1\cap
F_2\notin\F_\M(F_1,F_2)$. Then for any $\lambda>0$, $\N+\lambda\M$ satisfies
$(*)$.
\end{lemma}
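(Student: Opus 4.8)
The plan is to prove a bit more than asked: that $\N+\lambda\M$ has a non-principal modular cut for every $\lambda>0$. Since $(*)$ only requires the existence of such a cut once a non-modular pair of flats is present, establishing the cut unconditionally makes $(*)$ hold for $\N+\lambda\M$. By Claim~\ref{claim:modular-cuts} the hypothesis $F_1\cap F_2\notin\F_\M(F_1,F_2)$ is exactly condition~(ii) for $\M$, so by the equivalence with (iii) there is an excess function $e_\M$ of $\M$ with $e_\M(F_1)=e_\M(F_2)=0$ and $e_\M(F_1\cap F_2)>0$. It remains to transport this witness to $\N+\lambda\M$, for which I would use two elementary observations about conic combinations.

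First, a flat of $\M$ is a flat of $\N+\lambda\M$: if $F$ is a flat of $\M$ and $F'\supsetneq F$, then $f_\M(F')>f_\M(F)$ while $f_\N(F')\ge f_\N(F)$ by monotonicity, hence $(f_\N+\lambda f_\M)(F')>(f_\N+\lambda f_\M)(F)$ because $\lambda>0$. In particular $F_1$ and $F_2$ --- and hence, being an intersection of flats, also $F_1\cap F_2$ --- are flats of $\N+\lambda\M$. Second, if $e_\N$ is an excess function of $\N$ and $e_\M$ an excess function of $\M$, then $e_\N+\lambda e_\M$ is an excess function of $\N+\lambda\M$: each of the three conditions in Proposition~\ref{prop:excess-properties} asserts that a certain linear expression in $(e,f)$ is non-negative, and for $\N+\lambda\M$ both $e=e_\N+\lambda e_\M$ and $f=f_\N+\lambda f_\M$ split into the $\N$-part plus $\lambda$ times the $\M$-part, so the required inequality for $\N+\lambda\M$ is the inequality for $\N$ plus $\lambda$ times the inequality for $\M$, and is therefore non-negative. (This is the weighted version of the computation already used in the proof of Claim~\ref{claim:linear-adds-up}.)

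Now take $e_\N\equiv 0$, which is an excess function of $\N$. By the second observation $e:=\lambda e_\M$ is an excess function of $\N+\lambda\M$, with $e(F_1)=e(F_2)=0$ and $e(F_1\cap F_2)=\lambda e_\M(F_1\cap F_2)>0$; together with the first observation that $F_1,F_2$ are flats of $\N+\lambda\M$, this is precisely condition~(iii) of Claim~\ref{claim:modular-cuts} for $\N+\lambda\M$, so that polymatroid has a non-principal modular cut and hence satisfies $(*)$. I do not expect a genuine obstacle in this argument; the only spot requiring a little care is checking that $F_1$, $F_2$, $F_1\cap F_2$ remain flats after passing to $\N+\lambda\M$, which is exactly where the strict positivity $\lambda>0$ enters.
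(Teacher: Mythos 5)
Your proof is correct and follows essentially the same route as the paper: obtain the excess function $e_\M$ for $\M$ from Claim~\ref{claim:modular-cuts}, observe that flats of $\M$ remain flats of $\N+\lambda\M$, verify via the linearity of the conditions in Proposition~\ref{prop:excess-properties} that $\lambda e_\M$ is an excess function of $\N+\lambda\M$, and conclude by Claim~\ref{claim:modular-cuts}(iii). Your write-up is in fact somewhat more explicit than the paper's, which compresses the linearity check into a single remark.
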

\begin{proof}
This holds since there is a non-principal modular cut in $\N+\lambda\M$.
This follows from Claim \ref{claim:modular-cuts} once we show that a) $F_1$
and $F_2$ are flats in $\N+\lambda\M$ (this is trivial from the discussion
above), and b) there exists an appropriate excess function $e$ for
$\N+\lambda\M$.

Let $e_\M$ be the excess function for $\M$ with
$e_\M(F_1)=e_\M(F_2)=0$, and $e_\M(F_1\cap F_2)>0$, guaranteed by the condition
and Claim \ref{claim:modular-cuts}, and
define $e=\lambda e_\M$. Conditions in Proposition
\ref{prop:excess-properties} trivially hold (as they are linear), thus $e$
is the required excess function for $\N+\lambda\M$.
\end{proof}

\begin{lemma}\label{lemma:intersecting}
Suppose $\lambda>0$ and $\N+\lambda\M$ is a minimal counterexample to $(*)$. 
In $\M$ every intersecting pair of flats is modular.
\end{lemma}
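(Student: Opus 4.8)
The plan is to argue by contradiction and reduce directly to Lemma~\ref{lemma:flats}. Suppose $\M$ has a pair of flats $F_1,F_2$ that is intersectable but not modular. Since the two flats do not form a modular pair, the definition of intersectability forces its second clause: there is an excess function $e$ of $\M$ with $e(F_1)=e(F_2)=0$ and $e(F_1\cap F_2)>0$.

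The next step is to turn this into the combinatorial hypothesis required by Lemma~\ref{lemma:flats}. Exactly as in the proof of the implication from (iii) to (i) in Claim~\ref{claim:modular-cuts}, the collection $\{F:e(F)=0\}$ of flats of $\M$ is closed upwards and closed under modular intersection (this uses $\delta_e(A,B)+\delta_f(A,B)\ge 0$), hence is a modular cut; it contains $F_1$ and $F_2$ but not $F_1\cap F_2$. As $\F_\M(F_1,F_2)$ is the smallest modular cut containing $F_1$ and $F_2$, it is contained in $\{F:e(F)=0\}$, and therefore $F_1\cap F_2\notin\F_\M(F_1,F_2)$.

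Now Lemma~\ref{lemma:flats} applies to these very flats $F_1,F_2$ and yields that $\N+\lambda\M$ satisfies $(*)$ for every $\lambda>0$, contradicting the assumption that $\N+\lambda\M$ is a counterexample to $(*)$. Hence $\M$ has no intersectable non-modular pair of flats, which is precisely the assertion of the lemma.

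I do not expect a genuine obstacle here; the argument is a short reduction. The only mildly delicate point is bookkeeping: the witnessing excess function lives on $\M$, not on $\N+\lambda\M$, but transporting it (as $\lambda e_\M$) is already carried out inside the proof of Lemma~\ref{lemma:flats}. It is worth remarking that the minimality of the counterexample is not actually needed for this proof; it is recorded in the statement only because the lemma will be invoked when analysing a minimal counterexample in Section~\ref{sec:final}.
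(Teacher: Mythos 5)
Your proof establishes a different statement from the one the lemma makes. You have read ``intersecting pair of flats'' as ``intersectable pair of flats'' in the sense of Section~\ref{sec:linear}; in this lemma ``intersecting'' means simply that $F_1\cap F_2\neq\emptyset$. This is clear both from the paper's own proof, which immediately sets $S=F_1\cap F_2$ and assumes $S$ is not empty, and from how the lemma is invoked in Section~\ref{sec:final}, where case a) (two \emph{intersecting} non-modular flats, handled by this lemma) is deliberately complementary to case b) (two \emph{disjoint} flats generating a non-principal cut, handled by Lemma~\ref{lemma:flats}). What you actually prove is that $\M$ has no non-modular pair of flats $F_1,F_2$ with $F_1\cap F_2\notin\F_\M(F_1,F_2)$; that is correct but is essentially just the contrapositive of Lemma~\ref{lemma:flats}, and it says nothing about the case this lemma exists to cover: a non-modular pair of flats with non-empty intersection whose generated modular cut \emph{is} principal, i.e.\ contains $F_1\cap F_2$. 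For such a pair no excess function as in Claim~\ref{claim:modular-cuts}(iii) exists, so your reduction cannot even start.

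The missing idea is precisely the one for which minimality appears in the hypothesis --- your closing remark that minimality is not needed is the symptom of the misreading. The paper's argument is: if $(F_1,F_2)$ is a non-modular pair of flats of $\M^*=\N+\lambda\M$ with $S=F_1\cap F_2\neq\emptyset$, then the contraction $\M^*\contract S$ is a counterexample to $(*)$ on a strictly smaller ground set, contradicting minimality; hence $\M^*$ has no intersecting non-modular flat pairs, and since an intersecting non-modular pair of flats of the constituent $\M$ remains such a pair in $\M^*$, the conclusion transfers to $\M$. To repair your write-up you would need to supply this contraction step (and, ideally, the verification that the contraction is indeed still a counterexample), rather than routing everything through Lemma~\ref{lemma:flats}.
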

\begin{proof}
As $\M^*=\N+\lambda\M$ is a counterexample, it has a non-modular pair of
flats but no non-principal modular cut. If $(F_1,F_2)$ is a 
non-modular pair of flats in $\M^*$ and $S=F_1\cap F_2$ is not empty, then 
the contraction $M^*\contract S$ is a smaller counterexample to $(*)$.
Consequently $\M^*$ has no intersecting non-modular flat pairs.

To finish the proof one has to notice that if $F_1$ and $F_2$ are
intersecting non-modular flats in $\M$, then they remain the same in $\M^*$
as well.
\end{proof}

From here the strategy for checking $(*)$ should be clear. Every
polymatroid on a given ground set is a conic combination of finitely many
extremal polymatroids which can be listed explicitly when the polymatroid
has five or less elements \cite{studeny-kocka}. 
Suppose $\M$ violates $(*)$ and no counterexample
exists on a smaller ground set. This $\M$ is a conic combination of the extremal
polymatroids. The combining coefficient is zero if the corresponding
extremal polymatroid a) contains two intersecting non-modular flats (Lemma
\ref{lemma:intersecting}), or b) contains disjoint flats $F_1,F_2$ such that
the modular cut $\F(F_1,F_2)$ is not principal (Lemma \ref{lemma:flats}).
This hopefully leaves only a few extremal polymatroids which can be checked
individually.

\section{Sticky polymatroids on five or fewer elements}\label{sec:final}

\subsection{Polymatroids on two elements}\label{subsec:2}

Let $M=\{a,b\}$. Polymatroids on $M$ are conic combinations of the 
three extremal ones listed in Table \ref{table:2}.
Each of them is linearly representable, thus linear. By Claim
\ref{claim:linear-adds-up} their conic combination remains linear. 
Thus every polymatroid on two elements is linear, and by Claim
\ref{claim:linear-is-ok} they satisfy $(*)$.
\begin{table}[h!tb]%
\begin{center}\begin{tabular}{lccc}
       & $a$ & $b$ & $ab$ \\[2pt]
$\M_a$ &  1  &  0  &  1   \\
$\M_b$ &  0  &  1  &  1   \\
$\M_{ab}$& 1  &  1  &  1
\end{tabular}\end{center}%
\caption{Extremal polymatroids on $\{a,b\}$}\label{table:2}%
\kern -20pt
\end{table}

\subsection{Polymatroids on three elements}\label{subsec:3}

There are eight extremal polymatroids on $M=\{a,b,c\}$. Up to isomorphism
there are four different ones listed in Table \ref{table:3}; the others
can be obtained by permuting the elements of $M$.
\begin{table}[!htb]%
\begin{center}\begin{tabular}{lccccccc}
         & $a$ & $b$ & $c$ & $ab$ & $ac$ & $bc$ & $abc$ \\[2pt]
$\M_a$   &  1  &  0  &  0  &  1   &  1   &   0  &   1    \\
$\M_{ab}$  &  1  &  1  &  0  &  1   &  1   &   1  &   1    \\
$\M_{abc}$ &  1  &  1  &  1  &  1   &  1   &   1  &   1    \\
$\M_*$   &  1  &  1  &  1  &  2   &  2   &   2  &   2
\end{tabular}\end{center}%
\caption{Extremal polymatroids on $\{a,b,c\}$}\label{table:3}%
\kern -10pt
\end{table}
As in the two-element case, all of them are linearly representable,
thus every polymatroid on $\{a,b,c\}$ is linear. By Claim
\ref{claim:linear-is-ok}, $(*)$ holds for these polymatroids. In $\M_*$, every
pair of singletons is independent (modular), but any two determine the
third one.

\subsection{Polymatroids on four elements}\label{subsec:4}

Extremal polymatroids on four and five elements have been reported in
\cite{studeny-kocka}. The software package {\sf Polco} \cite{polco}
can generate the extremal rays from the collection of the defining inequalities.
The polymatroid cone $\Gamma_{\!abcd}$ has 41 extremal rays. There are only
11 different among the corresponding polymatroids up to isomorphism.
Table \ref{table:4} lists one element from each isomorphism
class; the ranks are shown as follows: first one-element subsets,
then two-element subsets, etc., each group in alphabetical order.
\begin{table}[!htb]%
\begin{center}\begin{tabular}{ll}
$\M_1$ & $0,\,0,\,0,\,1,\;0,\,0,\,1,\,0,\,1,\,1,\;0,\,1,\,1,\,1,\;1$\\
$\M_2$ & $0,\,0,\,1,\,1,\;0,\,1,\,1,\,1,\,1,\,1,\;1,\,1,\,1,\,1,\;1$\\
$\M_3$ & $0,\,1,\,1,\,1,\;1,\,1,\,1,\,1,\,1,\,1,\;1,\,1,\,1,\,1,\;1$\\
$\M_4$ & $0,\,1,\,1,\,1,\;1,\,1,\,1,\,2,\,2,\,2,\;2,\,2,\,2,\,2,\;2$\\
$\M_5$ & $1,\,1,\,1,\,1,\;1,\,1,\,1,\,1,\,1,\,1,\;1,\,1,\,1,\,1,\;1$\\
$\M_6$ & $1,\,1,\,1,\,1,\;1,\,2,\,2,\,2,\,2,\,2,\;2,\,2,\,2,\,2,\;2$\\
$\M_7$ & $1,\,1,\,1,\,1,\;2,\,2,\,2,\,2,\,2,\,2,\;2,\,2,\,2,\,2,\;2$\\
$\M_8$ & $1,\,1,\,1,\,1,\;2,\,2,\,2,\,2,\,2,\,2,\;3,\,3,\,3,\,3,\;3$\\
$\M_9$ & $1,\,1,\,1,\,2,\;2,\,2,\,2,\,2,\,2,\,2,\;2,\,2,\,2,\,2,\;2$\\
$\M_{10}$ & $1,\,1,\,1,\,2,\;2,\,2,\,3,\,2,\,3,\,3,\;3,\,3,\,3,\,3,\;3$\\
$\M_{11}$ & $2,\,2,\,2,\,2,\;3,\,3,\,3,\,3,\,3,\,4,\;4,\,4,\,4,\,4,\;4$
\end{tabular}\end{center}%
\caption{Extremal polymatroids on $\{a,b,c,d\}$}\label{table:4}%
\kern -10pt
\end{table}
Polymatroids $\M_1$--$\M_{10}$ are linearly representable. For $\M_{10}$
take three linearly independent vectors $\mathbf a$, $\mathbf b$, $\mathbf
c$; they span the one-dimensional subspaces assigned to $a,b,c$,
respectively, while $d$ gets the 2-dimensional subspace spanned by ${\mathbf
a}+\mathbf{b}$ and ${\mathbf a}+{\mathbf c}$. The polymatroid $\M_{11}$ is not linearly
representable, but the modular cut generated by the flats $ac$ and $bd$ is
$\{ac,bd,abcd\}$. Thus by Lemma \ref{lemma:flats} it cannot contribute to a
$(*)$-violating polymatroid. Consequently each polymatroid on four elements
satisfies $(*)$.

\subsection{Polymatroids on five elements}\label{subsec:5}

According to \cite{studeny-kocka} there are 117983 extremal polymatroids on
a five element set. Up to isomorphism there are 1320 different ones. By Lemmas
\ref{lemma:flats} and \ref{lemma:intersecting} we can eliminate those
extremal polymatroids that a) contain two intersecting non-modular flats,
or b) contain disjoint flats $F_1,F_2$ such that $\F(F_1,F_2)$ is not
principal (i.e., the generated cut does not contain the empty set). After this thinning 
\begin{table}[!htb]%
\thinmuskip=2mu%
\begin{center}\begin{tabular}{c}
$0,0,0,0,1,0,0,0,1,0,0,1,0,1,1,0,0,1,0,1,1,0,1,1,1,0,1,1,1,1,1$\\
$0,0,0,1,1,0,0,1,1,0,1,1,1,1,1,0,1,1,1,1,1,1,1,1,1,1,1,1,1,1,1$\\
$0,0,1,1,1,0,1,1,1,1,1,1,1,1,1,1,1,1,1,1,1,1,1,1,1,1,1,1,1,1,1$\\
$0,0,1,1,1,0,1,1,1,1,1,1,2,2,2,1,1,1,2,2,2,2,2,2,2,2,2,2,2,2,2$\\
$0,1,1,1,1,1,1,1,1,1,1,1,1,1,1,1,1,1,1,1,1,1,1,1,1,1,1,1,1,1,1$\\
$0,1,1,1,1,1,1,1,1,1,2,2,2,2,2,1,2,2,2,2,2,2,2,2,2,2,2,2,2,2,2$\\
$0,1,1,1,1,1,1,1,1,2,2,2,2,2,2,2,2,2,2,2,2,2,2,2,2,2,2,2,2,2,2$\\
$0,1,1,1,2,1,1,1,2,2,2,2,2,2,2,2,2,2,2,2,2,2,2,2,2,2,2,2,2,2,2$\\
$1,1,1,1,1,1,1,1,1,1,1,1,1,1,1,1,1,1,1,1,1,1,1,1,1,1,1,1,1,1,1$\\
$1,1,1,1,1,1,1,2,2,1,2,2,2,2,2,1,2,2,2,2,2,2,2,2,2,2,2,2,2,2,2$\\
$1,1,1,1,1,1,2,2,2,2,2,2,1,2,2,2,2,2,2,2,2,2,2,2,2,2,2,2,2,2,2$\\
$1,1,1,1,1,1,2,2,2,2,2,2,2,2,2,2,2,2,2,2,2,2,2,2,2,2,2,2,2,2,2$\\
$1,1,1,1,1,2,2,2,2,2,2,2,2,2,2,2,2,2,2,2,2,2,2,2,2,2,2,2,2,2,2$\\
$1,1,1,1,2,1,2,2,2,2,2,2,2,2,2,2,2,2,2,2,2,2,2,2,2,2,2,2,2,2,2$\\
$1,1,1,1,2,2,2,2,2,2,2,2,2,2,2,2,2,2,2,2,2,2,2,2,2,2,2,2,2,2,2$\\
$1,1,1,2,2,2,2,2,2,2,2,2,2,2,2,2,2,2,2,2,2,2,2,2,2,2,2,2,2,2,2$\\
$1,1,1,2,2,2,2,2,2,2,3,3,3,3,3,2,3,3,3,3,3,3,3,3,3,3,3,3,3,3,3$\\
\end{tabular}\end{center}%
\caption{Remaining extremal polymatroids on $\{a,b,c,d,e\}$}\label{table:5}%
\kern -10pt
\end{table}
we get a quite meager set of 17 isomorphism classes; representatives 
are listed in Table \ref{table:5}.
The ranks are shown by cardinality of the subset, and within that 
alphabetically. By inspection, all of them are linearly representable, thus
linear. By Claim \ref{claim:linear-adds-up}, any conic combination of linear
polymatroids is linear, and Claim \ref{claim:linear-is-ok} says that these 
linear polymatroids satisfy $(*)$. Consequently all polymatroids on 5 elements
satisfy $(*)$.



\section*{Acknowledgment}
The research reported in this paper was supported by GACR project
number  19-04579S, and partially by the Lend\"ulet program of the HAS.


\begin{thebibliography}{99}

\bibitem{bachem}
A.~Bachem and W.~Kern (1988)
\newblock On sticky matroids,
\newblock \emph{Discrete Math} {\bf 69} 11--18

\bibitem{bonin}
J.~E.~Bonin (2011)
\newblock A note on the sticky matroid conjecture,
\newblock \emph{Ann. Comb.} {\bf15} 619--624

\bibitem{convolution}
L.~Csirmaz (2019),
\newblock Sticky matroids and convolution.
\newblock Available at arXiv.org, arXiv:1909.02353

\bibitem{dougherty5}
R.~Dougherty, C.~Freiling, K.~Zeger (2009)
\newblock Linear rank inequalities on five or more variables.
\newblock Available at arXiv.org, arXiv:0910.0284 

\bibitem{hoch-wilhelmi}
W.~Hochst\"attler, M.~Wilhelmi (2019)
\newblock Sticky matroids and Kantor's conjecture,
\newblock \emph{Algebra Univers.} {\bf 80-12}, pp. 1--21

\bibitem{Lovasz} L.~Lov\'{a}sz (1982)
    Submodular functions and convexity.
    \emph{Mathematical Programming -- The State of the Art} (A.~Bachem,
M.~Gr\"otchel
    and B.~Korte, eds.), Springer-Verlag, Berlin, 234--257.

\bibitem{fmadhe}
F.~Mat\'u\v s (2007) Adhesivity of polymatroids,
\newblock {\it Discrete Mathematics}
{\bf 307} pp. 2464--2477

\bibitem{oxley} J.G.~Oxley (1992)
\emph{Matroid Theory}, Oxford Science Publications. The Calrendon Press,
Oxford University Press, New York 

\bibitem{padro} C.~Padro (2012)
\emph{Lecture Notes in Secret Sharing}, Cryptology ePrint Archive
2012/674

\bibitem{poljak-turzik}
S.~Poljak, D.~Turzik (1982)
\newblock A note on sticky matroids,
\newblock \emph{Discrete Math}, 42(1) 119--123

\bibitem{polco}
M.~Terzer (2009)
\newblock {\em Polco: A Java tool to compute extreme rays of polyhedral
cones.} Available at {\tt http://\allowbreak www.csb.ethz.ch/\allowbreak
tools/\allowbreak polco}

\bibitem{studeny-kocka}
 M.~Studeny, R.~R.~Bouckaert, T.~Kocka (2000)
\newblock {\em Extreme supermodular set functions over five variables}
\newblock Research Report no 1977, Institute of Information Theory and Automation,
Prague

\bibitem{yeungbook}
R.~W.~Yeung (2002)
\newblock \emph{A first course in information theory},
\newblock (2002) Kluwer Academic / Plenum Publishers


\end{thebibliography}
\end{document}